\begin{document}

\title[Infinite Time Cellular Automata]{Infinite Time Cellular Automata:\\ a Real Computation Model}
\author
{F. Givors}{Fabien Givors}
\author
{G. Lafitte}{Gregory Lafitte}
\author
{N. Ollinger}{Nicolas Ollinger}
\address
{Laboratoire d'Informatique Fondamentale de Marseille (LIF) \\ CNRS -- Aix-Marseille Université\\39 rue Joliot-Curie\\13453
  Marseille Cedex 13, France} 

\thanks{The research presented in this paper has been made possible by the support of the French ANR grants {\em NAFIT} (ANR-08-DEFIS-008-01) and {\em EMC} (ANR-09-BLAN-0164-01).}

\begin{abstract}\noindent
We define a new transfinite time model of computation, {\em infinite time cellular automata}. The model is shown to be as powerful than infinite time Turing machines, both on finite and infinite inputs; thus inheriting many of its properties. We then show how to simulate the canonical real computation model, {\em BSS machines}, with infinite time cellular automata in exactly $\omega$ steps.
\end{abstract}

\maketitle

\section*{Introduction}

When the second and third authors of this paper were in their PhD years, their common advisor, Jacques Mazoyer, had encouraged them to define a computation model on real numbers using cellular automata. The second author had defined at the end of the Nineties a cellular automata generalization running into transfinite time, but it was never published and was only clumsily defined in his PhD thesis. They thought of using it as a real computation model in 2001 in Riga but never did anything about it since. This paper is a description of these ideas which had stayed for ten years in the state of scribbled notes.

Transfinite time computation models were first considered in 1989 by Hamkins and Kidder. Hamkins and Lewis later developed the model of infinite time Turing machines and its theory in \cite{HL00}. Koepke \cite{Koe06} defined another transfinite time model based on register machines, that was later refined by Koepke and Miller \cite{KM08}. These models differ in their computation power, the infinite time Turing machines model being the most powerful one.

In \cite{BSS89}, Blum, Shub and Smale introduced, also in 1989, a model of computation, coined {\em BSS machines}, intended to describe computations over the real numbers. It can be viewed as Turing machines with tapes whose cells (or Random Access Machines with registers that) can store arbitrary real numbers and that can compute rational functions overs reals at unit cost. Their model, which is more general\footnote{It really provides a setting for computing over rather arbitrary structures.} than the presentation provided in this paper, is a canonical model of real computation. The idea of real computation is to deal with hypothetical computing machines using infinite-precision real numbers and to be able to prove results of computations operating on the set of real numbers. A typical example is studying the computability of the Mandelbrot set. It can be viewed as an idealized analog computer which operates on real numbers and is differential, whereas digital computers are limited to integers and are algebraic. For a survey of analog computations, the reader is referred to the survey \cite{Bou08}.

In this paper, we introduce a transfinite time computation model, the {\em infinite time cellular automata}.
The model is arguably more {\em natural} and {\em uniform} than other transfinite time models introduced, for the same reasons cellular automata are more {\em natural} and {\em uniform} than Turing machines. There is no head wandering here and there and there is no difference between states and data.

We show that the infinite time cellular automata have the same computing power than infinite time Turing machines, both on finite and infinite inputs. They thus inherit the nice properties of this latter model. We then show how to simulate the BSS machines with infinite time cellular automata in exactly $\omega$ steps.
We finish by introducing another transfinite time model based on cellular automata.

\section{Infinite time cellular automata}
\label{ifac}

\begin{defi}
\label{def:itac}

An {\em infinite time cellular automaton} (ITCA) $A$ is defined by $\Sigma$, the finite
set of states of $A$, linearly ordered by $\prec$ and with a least element
$\mathbf{0}$;
and $\delta : \Sigma^3 \to \Sigma$, the local rule of $A$,
satisfying $\delta(\mathbf{0},\mathbf{0},\mathbf{0})=\mathbf{0}$, so that
$\mathbf{0}$ is a {\em quiescent} state.

A {\em configuration} is an element of $\Sigma^{\mathbb{Z}}$.
The local rule $\delta$ induces a global rule $\Delta: \Sigma^{\mathbb{Z}} \to \Sigma^{\mathbb{Z}}$ on configurations such that
$\Delta(C)_i = \delta (C_{i-1},C_i,C_{i+1})$ for $i\in\mathbb{Z}$ and $C\in\Sigma^{\mathbb{Z}}$.

Starting from a configuration $C\in\Sigma^{\mathbb{Z}}$, the {\em evolution} of length $\theta\in\text{Ord}$ of $A$ is given by $(\Delta^\alpha(C))_{\alpha \leqslant \theta}$:
$$
\begin{array}{lcl}
\Delta^{\beta+1}(C) & = &\Delta (\Delta^{\beta} (C)) \\
\Delta^{\lambda}(C)_i & =  &\liminf^{\prec}_{\gamma<\lambda} \Delta^{\gamma}(C)_i \text{ for all $i\in \mathbb{Z}$ and $\lambda$ limit} 
\end{array}
$$
\end{defi}

\begin{defi}
Let $\mathbf{h}\in\Sigma$ a particular state we will refer to as the {\em halting} state.

An evolution of length $\theta$ is called a {\em computation} if the state $\mathbf{h}$ appears in the last configuration, $\Delta^\theta(C)$, but not before this stage. 
\end{defi}

We settle a convention on the way we code integers or real numbers in our model. For example, we could code integers and real numbers in binary (using $\mathbf{0}$ and another state as symbols for $0$ and $1$) on the right cells (cells whose indices belong to $\mathbb{N}$).

\begin{defi}
Let $X$ be a space for which a coding has been settled. (For example, $\mathbb{N}$, $\mathbb{R}$, $2^\mathbb{N}$ or $2^\mathbb{Z}$.)

A (partial) function $F$ on $X$, $F: X \to X$, is said to be {\em infinite time computable} if
there is an infinite time cellular automaton such that for each $x\in\mathop{dom}(F)$, there is
a computation starting with a configuration with a coding of $x$ that halts on a configuration with a similar coding of $F(x)$.

A set $A\subseteq X$ is {\em infinite time decidable} if its characteristic function is infinite time computable, and is {\em infinite time semi-decidable} if it is the domain of an infinite time computable function.
\end{defi}

We use the term ``semi-decidable" instead of ``enumerable", since contrarily to the classical computability concepts, in the transfinite time context, being semi-decidable is not equivalent to being the range of a computable function.

\section{Properties of infinite time cellular automata}

\subsection{Comparisons with other infinite time models}

Hamkins, Kidder and Lewis \cite{HL00} have defined an infinite time Turing machines model and Koepke \cite{Koe06} has defined an infinite time register machines model, that was later refined by Koepke and Miller \cite{KM08}.

The {\em infinite time Turing machines} (ITTM) work as a classical Turing machine with a head reading and writing on a bi-infinite tape, moving left and right in accordance with the instructions of a finite program with finitely many states. At successor stages of computation, the machine operates in exactly the classical manner. At limit stages, the machine enters a special limit state, with the head on the origin cell, and each cell of the tape taking the value of the lim inf of the values appearing in that cell before that limit stage.

The {\em infinite time register machines} behave like standard register machines at successor stages.
At limit times, the register contents are defined using lim inf's of the previous register contents.
The difficulty here is that the lim inf does not necessarily exist in that case, since a register can contain arbitrary large integers. The machines of \cite{Koe06} crashed in such a case. Those of \cite{KM08} continue beyond such crashes by resetting a register to $0$ whenever it overflows.

The infinite time Turing machines are strictly stronger than infinite time register machines: the halting problem for infinite time register machines can be decided by an ITTM.

\begin{thm}\label{thm:itcapower}
Infinite time cellular automata have the same computing power of infinite time Turing machines.
\end{thm}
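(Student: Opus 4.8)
The plan is to prove the theorem by establishing a two-way simulation between the two models. The claim is an equivalence of computing power, so I would split it into two lemmas: first, that any ITTM can be simulated by an ITCA, and second, that any ITCA can be simulated by an ITTM. Both directions must respect the coding conventions for inputs and outputs, and crucially both must handle the transfinite limit behavior faithfully, since that is what distinguishes these models from their classical counterparts.

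For the direction \emph{ITTM into ITCA}, I would encode the entire instantaneous description of the Turing machine into a configuration of the automaton. The bi-infinite tape of the ITTM maps naturally onto the bi-infinite cell-space $\mathbb{Z}$ of the ITCA, with each cell's state recording the tape symbol at that position. The finite control state and the head position, which in a Turing machine are global nonlocal data, must be localized: I would mark the head position by augmenting the state of the single cell under the head to also carry the machine's current control state, using a product alphabet $\Sigma = \Gamma \times (Q \cup \{\bot\})$ where $\Gamma$ is the tape alphabet and $Q$ the ITTM control states. A successor step of the ITTM then becomes one (or a bounded number of) applications of $\Delta$: the local rule reads the neighborhood, and the cell holding the head computes the new symbol, new state, and shifts the head marker to the correct neighbor. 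The delicate point is the limit behavior. The ITTM takes the $\liminf$ of each tape cell's values and places the head at the origin in a distinguished limit state; the ITCA takes, cell-by-cell, the $\liminf^{\prec}$ of its states. I must choose the ordering $\prec$ on $\Sigma$ so that the componentwise $\liminf$ on the product alphabet reproduces the tape-cell $\liminf$ of the ITTM exactly, and simultaneously ensure the head-marker component stabilizes to the correct ``limit state at the origin'' configuration. This is the step I expect to be the main obstacle: reconciling a single componentwise $\liminf^{\prec}$ with the ITTM's combination of a $\liminf$ on tape contents and a separate repositioning of the head to the origin. The head component does not naturally converge to ``origin'' under a cell-local $\liminf$, so I would likely need auxiliary marker states and a short post-limit ``cleanup'' phase driven by the local rule to detect that a limit has occurred and reset the head marker to cell $0$, possibly using the quiescent state $\mathbf{0}$ and the ordering to guarantee the right limiting values.

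For the reverse direction \emph{ITCA into ITTM}, the simulation is more routine: an ITTM can store the ITCA configuration on its tape (the two cell-spaces align), and simulating one application of $\Delta$ is a classical finite computation scanning the tape and writing the image configuration, doable in $\omega$-many successor steps per ITCA step with appropriate bookkeeping. At an ITCA limit stage I must arrange that the ITTM's own limit rule (componentwise $\liminf$ on its tape) yields precisely the configuration the ITCA would produce by its $\liminf^{\prec}$. Encoding each ITCA state by a block of binary symbols whose $\liminf$ under the ITTM's $0/1$ limit rule agrees with $\liminf^{\prec}$ on $\Sigma$ requires a careful choice of binary encoding of states respecting $\prec$; this is fiddly but not conceptually hard. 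Throughout both directions I would track the halting convention: the appearance of $\mathbf{h}$ in the ITCA and the halt state of the ITTM must be made to coincide, and the computation length $\theta$ must be preserved up to the bounded slowdown, so that a computation in one model of length $\theta$ corresponds to a computation in the other of length at most $\omega\cdot\theta$. Finally I would remark that since the slowdown is by a multiplicative factor that is absorbed into the ordinal arithmetic, the equivalence holds both for finite and for infinite inputs, as claimed.
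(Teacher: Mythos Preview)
Your proposal is correct and follows essentially the same two-way simulation strategy as the paper: classical step-by-step simulation at successor stages, with the main work going into handling limit stages. Your ``auxiliary marker states and post-limit cleanup phase'' to detect that a limit has occurred and reset the head to the origin is exactly the paper's device---it uses two adjacent cells at the origin alternating between $\mathbf{0}$ and another state so that both drop to $\mathbf{0}$ under $\liminf^{\prec}$ at limits---and your observation that the ITCA-to-ITTM direction costs a factor~$\omega$ per global step matches the paper's argument as well.
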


\begin{proof}
The right to left implication goes as follows:

At successor stages, the simulation of ITTM by ITCA works the same way it does in the
non-infinite-time case.

At limit stages,
we have to put the configuration of the ITCA in the limit configuration
simulation of the ITTM simulated.
To do this, we need to be able to know that we are at a limit stage.
It suffices to have two adjacent cells of the ITCA at the origin that, at
successor stages, alternate between $\mathbf{0}$ and some other state such that
the adjacent states are different. At the next limit stage, they will be both equal to
$\mathbf{0}$. We also have to use the same trick on the cells visited by the head to make sure that at limit stages, if the ITTM becomes stationary, we can wipe out the stationary state from our simulation tape and enter in the special limit state. The ITCA can then prepare the configuration to continue the ITTM
simulation.

The left to right implication: it takes $\omega$ steps with an ITTM machine to
simulate an ITCA global step (on an infinite input). It is just then a matter of
determining whether the ITTM is at a limit stage or not. This is easily achieved by
an ITTM since it enters a special limit state at limit stages. 
\end{proof}

\subsection{Features of those infinite time models}

Hamkins, Kidder, Lewis and Welch \cite{HL00,Wel99,Wel00a,Wel00b} have shown many properties of infinite time Turing machines. By Theorem \ref{thm:itcapower}, infinite time cellular automata have many of these same properties. We state in the following the properties inherited by infinite time cellular automata.

\begin{thm}
The set of reals coding well-orders is infinite time decidable.
	
The hyperarithmetic sets are those that are decidable in time less than some recursive ordinal. Every $\Pi^1_1$ set is decidable and the class of decidable sets is contained in $\Delta^1_2$.
\end{thm}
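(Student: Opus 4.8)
The plan is to derive every assertion from Theorem~\ref{thm:itcapower} by transporting the corresponding facts about infinite time Turing machines to infinite time cellular automata. For the second statement, which mentions a time bound, I would first verify that the two simulations underlying Theorem~\ref{thm:itcapower} distort computation lengths only by recursively controlled amounts: each ITTM step is realised by a bounded number of ITCA steps, while each ITCA step costs $\omega$ ITTM steps. Since the recursive ordinals form an initial segment of $\mathrm{Ord}$ closed under $\xi\mapsto\omega\cdot\xi$ and under the bounded overheads of these simulations, a set is ITCA-decidable in time below some recursive ordinal exactly when it is ITTM-decidable in time below some recursive ordinal. The first and third statements carry no time bound, so for them the bare equivalence of Theorem~\ref{thm:itcapower} is enough.

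For the decidability of the set of reals coding well-orders, I would give a direct ITCA algorithm, which the equivalence lets us run as an ITTM procedure if preferred. Fix a coding of a binary relation $<_x$ on $\mathbb{N}$ by a real $x\in 2^{\mathbb{N}}$. First test, in at most $\omega$ steps, that $<_x$ is a linear order; this is an arithmetic condition and is harmless. To test well-foundedness I would repeatedly delete least elements: keep one cell per element of $\mathbb{N}$ recording whether it is still present, at each successor stage locate and delete the $<_x$-least present element, and let the limit rule do the bookkeeping at limit stages. Because deletion is monotone, the $\liminf^{\prec}$ rule makes the present-set at a limit stage equal to the intersection of the earlier present-sets, that is, the elements never yet deleted. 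The order is a well-order iff this residual eventually becomes empty; ill-foundedness shows up as a nonempty residual with no least element on which the process stalls. A toggling flag, reset at each limit, lets the machine detect \emph{empty} (accept) or \emph{stalled and nonempty} (reject), and the process halts at some ordinal stage.

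For the characterization of the hyperarithmetic sets I would argue both inclusions. Every hyperarithmetic ($\Delta^1_1$) set is obtained by iterating the Turing jump along a recursive well-order; an ITCA can carry out such an iteration and reach the answer within the corresponding recursive ordinal number of stages, using the limit rule to pass the jump at limit levels. Conversely, a computation that halts before a fixed recursive ordinal can be unwound into a hyperarithmetic description of its output, since the entire bounded run is a hyperarithmetically definable object. The time-bound transport of the first paragraph then moves this equivalence from ITTM to ITCA.

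Finally, $\Pi^1_1$ sets are decidable because each such set reduces, via the Kleene normal form, to the well-foundedness of a uniformly recursive tree, and deciding well-foundedness is exactly the first statement. For the upper bound, membership in an ITCA-decidable set is witnessed by an accepting computation and non-membership by a rejecting one; each such computation is a countable object whose existence is $\Sigma^1_2$, and having both a $\Sigma^1_2$ definition of the set and of its complement places every decidable set in $\Delta^1_2$. I expect the genuine obstacle to be this last bound: making the $\Delta^1_2$ definition precise requires coding an entire transfinite ITCA run as a second-order object and invoking a Shoenfield-style absoluteness argument to see that the length and outcome of the computation are correctly captured, rather than any direct cellular-automaton simulation.
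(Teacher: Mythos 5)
Your proposal is correct and follows essentially the same route as the paper: the paper offers no proof of this theorem at all, presenting it as a list of ITTM properties from Hamkins--Lewis and Welch that transfer to ITCAs via Theorem~\ref{thm:itcapower}. Your additional observations --- that the time bound in the hyperarithmetic characterization survives the simulations because recursive ordinals are closed under $\xi\mapsto\omega\cdot\xi$ and bounded overheads, and the sketches of the underlying ITTM arguments (count-through for well-orders, Kleene normal form for $\Pi^1_1$, the $\Sigma^1_2$ description of transfinite runs for the $\Delta^1_2$ bound) --- are all sound and fill in details the paper delegates entirely to the cited literature.
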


\begin{defi}
An ordinal $\alpha$ is {\em clockable} if there is an ITCA computation starting from the all-but-one quiescent configuration $C$ ($C_0\neq \mathbf{0}$ and $C_i = \mathbf{0} \ \forall i\in \mathbb{Z}\setminus\{0\} $) and that halts after exactly $\alpha$ steps (meaning that the $\alpha^{\text{th}}$ configuration, $\Delta^\alpha(C)$, is the first configuration in which the halting state $\mathbf{h}$ appears).

A real $r$ is {\em writable} if it is the output of an ITCA computation. An ordinal is writable if it is coded by such a real.
\end{defi}

There are of course only countably many clockable and writable ordinals, since there are only countably many local rules.

\begin{thm}
Every recursive ordinal is clockable. Even $\omega^{\text{CK}}_1 + \omega$ is clockable.  Beyond that, there are many intervals of non-clockable ordinals. The supremum of clockable ordinals is recursively inaccessible\footnote{A {\em recursively inaccessible} ordinal is an ordinal that is both admissible and a limit of admissibles. An ordinal $\alpha$ is {\em admissible} if the construction of the Gödel universe, $L$, up to a stage $\alpha$, yields a model $L_\alpha$ of Kripke-Platek set theory. The Church-Kleene ordinal, $\omega^{\text{CK}}_1$, is the smallest non-recursive ordinal and is the smallest admissible ordinal. For more on admissibles, see the most excellent book of Barwise \cite{Bar75}.}. Moreover, the writable ordinals however form an initial segment of the ordinals. The supremum of the writable ordinals is the supremum of the clockable ordinals.
\end{thm}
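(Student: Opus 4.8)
The plan is to transfer each clause from the corresponding property of infinite time Turing machines, established by Hamkins, Lewis and Welch \cite{HL00,Wel99,Wel00a,Wel00b}, through the power equivalence of Theorem \ref{thm:itcapower}. The statements fall into two groups that behave very differently under the transfer. The assertion about \emph{writable} ordinals is insensitive to running time and depends only on which reals are computable, so it should transfer almost verbatim. The \emph{clockability} assertions (every recursive ordinal is clockable, $\omega^{\text{CK}}_1+\omega$ is clockable, the gap structure, and the recursive inaccessibility of the supremum) are sensitive to the \emph{exact} ordinal length of a computation, and here the transfer must reckon with the fact that the two simulations of Theorem \ref{thm:itcapower} have opposite time behaviors.

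For the writable ordinals I would first observe that, by Theorem \ref{thm:itcapower}, a real is produced by an ITCA computation exactly when it is produced by an ITTM computation, so the ITCA-writable reals and hence the ITCA-writable ordinals coincide with their ITTM counterparts. The initial-segment property is then inherited: given a code $r$ for an ordinal $\alpha$ written by some ITCA and any $\beta<\alpha$, one writes $r$, locates inside it the well-order code of the initial segment of order type $\beta$, and outputs that code, which is a computable operation hence again realizable by an ITCA. For the equality of the two suprema I would use the ITTM theorem $\sup(\text{clockable})=\sup(\text{writable})$ together with the fact, just noted, that the writable ordinals are the same in both models; the recursive inaccessibility of this common value is then read off from Welch's analysis.

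For the clockable ordinals the strategy is to establish the inclusion that the step-for-step half of the simulation already gives us: every ITTM-clockable ordinal is ITCA-clockable by \emph{exactly} the same number of steps. Since the successor part of the simulation of Theorem \ref{thm:itcapower} proceeds as in the classical simulation of a Turing machine by a cellular automaton, the $\beta$-th ITTM configuration is produced at ITCA stage $\beta$ for successor $\beta$. What must be checked with care is that this synchronization survives the limits: the limit configuration obtained by the $\liminf^{\prec}$ rule must \emph{already} encode the ITTM limit configuration, with the simulated head repositioned at the origin and the limit state installed, rather than requiring one or more extra successor steps of clean-up after each limit. Absorbing such per-limit overhead is the crux, for a single extra step at each limit below $\alpha$ would push the total length strictly past $\alpha$ and destroy exact clockability of ordinals such as $\omega^{\text{CK}}_1+\omega$. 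I would resolve this by arranging the encoding so that the $\liminf^{\prec}$ of the head/state track computes the correct repositioned datum at the limit stage itself, using the alternating-$\mathbf{0}$ markers of Theorem \ref{thm:itcapower} to make the head information collapse to $\mathbf{0}$ precisely at limits; any fixed finite set-up cost is then absorbed by the standard closure of clockable ordinals under finite translation. This inclusion yields that every recursive ordinal and $\omega^{\text{CK}}_1+\omega$ are clockable, and that $\sup(\text{ITCA-clockable})\geq\sup(\text{ITTM-clockable})$; combined with the (in-model) fact that a clockable ordinal is writable, which bounds $\sup(\text{ITCA-clockable})$ by $\sup(\text{ITCA-writable})$, this pins the supremum to the common value and gives its recursive inaccessibility.

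I expect the genuine obstacle to be the gap structure, that is, the clause asserting \emph{intervals} of non-clockable ordinals. Transferring non-clockability requires the reverse inclusion, ITCA-clockable $\subseteq$ ITTM-clockable, and this cannot be obtained by direct simulation: the other half of Theorem \ref{thm:itcapower} slows an ITCA global step down by a factor of $\omega$, so an ITCA halting at exactly $\alpha$ is simulated by an ITTM that halts far later than $\alpha$, never witnessing $\alpha$ as ITTM-clockable. Consequently the two clockable sets need not coincide ordinal-by-ordinal, and the existence of gaps for ITCA does not follow formally from the existence of gaps for ITTM. The way I would attack this is not through simulation at all but by reproducing the admissibility argument inside the ITCA model: gaps arise exactly because certain admissible (writable) ordinals fail to be clockable, and the recursive inaccessibility of the supremum together with a counting argument forces cofinally many such failures. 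Making that argument run with ITCA timing is where the real work concentrates, and it is the step I would flag as the main difficulty.
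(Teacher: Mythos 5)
The paper does not actually prove this theorem: it is stated as a list of properties ``inherited'' from infinite time Turing machines via Theorem~\ref{thm:itcapower}, with the citations to Hamkins--Lewis and Welch doing all the work. Your analysis is therefore considerably more careful than the text it is meant to reproduce, and your diagnosis is the right one: the writability clauses transfer because they are timing-insensitive, the positive clockability clauses transfer through the step-for-step half of the simulation provided the limit stages are synchronized exactly, and the negative clause (the existence of intervals of non-clockable ordinals) does \emph{not} follow from the simulation in either direction, because the ITTM-simulates-ITCA direction dilates time by a factor of $\omega$. Your worry about per-limit overhead is legitimate but more benign than you suggest: if each limit stage costs a fixed finite clean-up $c$, the delay does not accumulate ($\sup_{n}(\lambda'+c+n)=\lambda'+\omega$ collapses the previous overhead at the next limit), so the simulation reaches the $\alpha$-th ITTM configuration at an ITCA stage of the form $\alpha+c$, and one then needs only the ITCA analogue of the Hamkins--Lewis finite speed-up lemma ($\alpha+n$ clockable implies $\alpha$ clockable) to recover exact clockability; alternatively one arranges, as you propose, that the $\liminf^{\prec}$ itself lands on the correct limit configuration.

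The one genuine gap is the clause you flag and leave open, and it should not be left as a mere difficulty: it has a short, model-independent proof that you could have supplied. The halting at stage $\alpha$ of an ITCA started from the all-but-one quiescent configuration is a $\Sigma_1$ statement over $L_\alpha$ (the entire course of computation up to $\alpha$ is definable there, since the evolution rule and the $\liminf^{\prec}$ limit rule are absolute). Hence if $\alpha$ is admissible, $\Sigma_1$-reflection gives a halting stage strictly below $\alpha$, so no admissible ordinal is ITCA-clockable. Since $\omega^{\text{CK}}_1+\omega$ is clockable while $\omega^{\text{CK}}_1$ is not, the clockable ordinals are not an initial segment; and since the supremum of the clockable ordinals is a limit of admissibles with clockable ordinals cofinal below it, there are cofinally many such gaps. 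This is exactly the ``admissibility argument run inside the ITCA model'' that you gesture at, and once it is written out your proof is complete; without it, the clause about intervals of non-clockable ordinals remains unproved in your proposal.
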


One of the beautiful theorems of ITTMs that carry through to ITCAs is the Lost Melody Theorem.
The real constructed in this theorem is like a lost melody that you can recognize when someones hums it to you, but which you cannot sing on your own. 

\begin{thm}[Lost Melody Theorem]
There is a real $r$ which is recognizable ($\{r\}$ is decidable), but not writable.
\end{thm}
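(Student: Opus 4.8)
The plan is to transfer the Lost Melody Theorem from infinite time Turing machines to infinite time cellular automata using Theorem~\ref{thm:itcapower}. Since the two models have the same computing power, a real is writable (respectively recognizable/decidable) for ITCA if and only if it is writable (respectively recognizable) for ITTM. So it suffices to exhibit a real $r$ that is recognizable by an ITTM but not writable by one, and then invoke the equivalence. First I would recall the ITTM machinery of writable and recognizable reals: a real is \emph{writable} if it is the output of a halting computation started on a trivial input, and it is \emph{recognizable} if there is a machine that halts with output $1$ exactly on input $r$ and halts with output $0$ on every other input.

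The key idea is to locate $r$ high enough in the constructible hierarchy that it cannot be written, yet make it internally definable so that it can be recognized. Concretely, I would let $\lambda$ be the supremum of the writable ordinals (equivalently of the clockable ordinals, by the preceding theorem) and consider a real that codes information about $L_\lambda$ or slightly beyond. The standard construction takes $r$ to be the $<_L$-least real such that $r$ codes (a well-order of type) some ordinal $\gamma$ that is \emph{not} writable but for which $L_\gamma$ has a definability property pinning down $r$ uniquely. Because $r$ codes a non-writable ordinal, no ITTM can produce $r$ as output in fewer than $\lambda$ steps, and since every writable real is produced before stage $\lambda$, $r$ is not writable.

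To show $r$ is recognizable, the plan is to design a machine that, on an arbitrary input real $s$, checks the defining property of $r$: it verifies that $s$ codes a well-order, computes the coded ordinal and its initial segments of $L$, and confirms that $s$ is indeed the $<_L$-least real witnessing the chosen definable condition. All of these are $\Pi^1_1$ or $\Delta^1_2$ tasks, hence decidable by an ITTM by the theorem on decidable sets stated above; the machine halts with $1$ precisely when $s=r$ and with $0$ otherwise. Thus $\{r\}$ is decidable.

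The main obstacle is choosing the defining property of $r$ so that it simultaneously guarantees non-writability and internal recognizability: the coded ordinal must be large enough to escape every writable computation, yet the characterization ``$s$ is the $<_L$-least real coding an ordinal with property $P$'' must remain checkable by a single ITTM on arbitrary input. Balancing these two requirements — pushing $r$ above $\lambda$ while keeping its specification arithmetically/analytically simple enough to decide — is the crux, and it rests on the fine structure of $L$ together with the decidability bounds ($\Pi^1_1 \subseteq$ decidable $\subseteq \Delta^1_2$) already granted to the model. Once $r$ is fixed, transferring the result to ITCA is immediate by Theorem~\ref{thm:itcapower}.
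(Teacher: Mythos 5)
Your proposal follows essentially the same route as the paper: transfer to ITTMs via Theorem~\ref{thm:itcapower}, take the $<_L$-least real coding a countable ordinal beyond the supremum of the writable ordinals, and recognize it by rebuilding the $L$-hierarchy inside an ITTM. The paper makes the defining property you leave abstract concrete --- $\delta$ is the least ordinal $\geqslant \gamma$ (the supremum of halting-or-repeating stages) with $L_{\delta+1}\models\text{``$\delta$ is countable''}$, and $r$ is the $<_L$-least real in $L_{\delta+1}$ coding $\delta$ --- and note that your appeal to $\Delta^1_2$ tasks being decidable is backwards (the theorem only gives decidable $\subseteq \Delta^1_2$), so recognizability must rest on the direct $L$-reconstruction you also describe rather than on that inclusion.
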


There are different ways to construct such lost melody reals. 
One way is to consider the supremum $\gamma$ of the ordinal stages by which an ITTM computation, on an empty input, either halts or repeats. 
Notice that by a simple cofinality argument, we can show that all these ordinals are countable.
There is a smallest ordinal $\delta \geqslant \gamma$ such that $L_{\delta+1} \models \text{``$\delta$ is countable''}$. $L_{\delta +1}$ has a canonical well-ordering, thus there is some real $r \in L_{\delta+1}$ which is least with respect to the canonical $L$ order, such that $r$ codes $\delta$. 
$\gamma$ (and thus $r$) are somehow a generalization of the busy beaver problem to transfinite time computations. It is then not surprising that $r$ cannot be computable, since that would render the infinite time halting problem decidable. It is recognizable because it is possible to reconstruct the $L$ hierarchy using an ITTM and verify that $r$ is really the least coding of an ordinal having the properties of $\delta$.

\section{Computations on the reals}

\subsection{Blum-Shub-Smale model}

Blum, Shub and Smale \cite{BSS89} introduced the {\em BSS model}. 

A simplified presentation of the BSS model goes through defining ``Turing machines with real numbers''. We follow Hainry's presentation in his PhD thesis \cite{HaiPhD}.

\begin{defi}
A {\em simplified BSS machine} (or shortly, BSS machine) is composed of an infinite tape and a program.
The infinite tape is made of cells, each containing a real number.  
We denote the tape by $(x_n)_{n\in\mathbb{Z}} \in \mathbb{R}^\mathbb{Z}$.
The program is a numbered (finite) sequence of instructions. The number of each instruction in the program sequence is seen as a state ($\in Q$). 
The instructions are
\begin{itemize}
\item {\em go right}: changes the tape to $(x_{n+1})_{n\in\mathbb{Z}}$;
\item {\em go left}: changes the tape to $(x_{n-1})_{n\in\mathbb{Z}}$;
\item {\em branch if greater than $0$}: if the current cell ($x_0$) is greater than $0$, then it branches to a specified location in the program;
\item {\em branch if equal to $0$}: if the current cell ($x_0$) is equal to $0$, then it branches to a specified location in the program;
\item {\em make a computation}: the current cell ($x_0$) is changed to be equal to the result of a computation from $x_0$, $x_1$ and possible constants ($k\in\mathbb{R}$). A computation is one of the following:
\begin{itemize}
\item $x_0 \shortleftarrow -x_0$;
\item $x_0 \shortleftarrow k$;
\item $x_0 \shortleftarrow x_0+x_1$;
\item $x_0 \shortleftarrow x_0\times x_1$;
\item $x_0 \shortleftarrow k\times x_0$.
\end{itemize}
\end{itemize}
The machine starts by executing the first instruction (with the least number) of the program and continues by executing the next instruction and so on until it branches. It halts when it has no more instructions to execute.
\end{defi}

It is possible to give a definition for a more general BSS machine on rather arbitrary structures. In this paper, we will stick with simplified BSS machines on $\mathbb{R}$.

For a lot more on the {\em BSS model} and computation on the real numbers, the reader is referred to the book by Blum, Cucker, Shub and Smale \cite{BCSS98}.

\subsection{BSS by $\omega$-ITCA}

We show how to simulate a simplified BSS machine with an ITCA.

\begin{thm}
A simplified BSS machine can be simulated by an ITCA in $\omega$ steps.
\end{thm}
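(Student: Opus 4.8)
The plan is to encode an entire BSS configuration---the bi-infinite tape of reals, the head position, and the program counter---into one ITCA configuration, and then to run the automaton for exactly $\omega$ steps so that the single limit stage resolves all of the infinitary operations at once. First I would fix a signed binary coding of reals and lay the tape out on two-dimensional tracks: one ``column'' of cells per BSS cell, the rows of a column carrying the successive binary digits of that real, folded into $\mathbb{Z}$ by an interleaving that keeps the digits of neighbouring BSS cells within a bounded radius. The finite data (the head marker, the state in $Q$, and the opcode currently dispatched) are carried on a separate finite track near the origin. Since on a halting input the head visits only finitely many cells, only finitely many columns are ever activated, and the layout can be grown lazily by the local rule.

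On the successor stages the local rule would simulate the BSS control exactly as in the classical finite-time cellular simulation of a Turing machine: \emph{go left} and \emph{go right} are column shifts, and dispatching an instruction updates the finite track. The genuinely infinitary work is the arithmetic and the comparisons. Negation, scalar multiplication, and loading a constant are digitwise and finish in boundedly many steps, but $x_0+x_1$ and $x_0\times x_1$ need carries that aggregate infinitely many low-order digits and so cannot terminate at any finite stage. I would therefore run them as precision-propagating waves: at step $t$ the configuration holds every real of the trajectory computed from the first $\sim t$ input digits, and since a halting computation has some finite depth $N$ the precision lost per operation is bounded, so every intermediate and output real is pinned down to arbitrary precision as $t\to\omega$. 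The point for the ``$\omega$'' bound is that all $N$ operations are driven \emph{concurrently} by one advancing precision front rather than one after another; a sequential treatment would already cost $\omega\cdot N$.

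The comparisons are the crux. A test $x_0>0$ or $x_0=0$ asks a property of infinitely many digits, so it is undecided at every finite stage, and, worse, the branch it controls determines which operations are run downstream, so a tentatively chosen execution path can be wrong at every finite stage---for instance when the tested quantity vanishes only by exact cancellation. To keep everything inside a single limit I would execute the program \emph{speculatively}: at step $t$ the rule unfolds the execution tree to depth $t$, spawning at each branch the regions for both outcomes, so that at every finite stage only a finite depth-$t$ portion of the tree is active while the true halting leaf, at depth $N$, is present for all $t\ge N$. Each node carries its reals as above, and the digits delivered by the carry waves recur cofinally on $\mathbf{0}$ at exactly those positions whose true digit is $0$; consequently the limit rule $\liminf^{\prec}_{\gamma<\omega}$ writes, in each node, precisely the canonical coding of that node's true real, hence the correct verdict of every sign/zero test. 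These verdicts then pick out the unique consistent branch reaching the halting instruction, whose leaf already carries the coded output $F(x)$.

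Halting would be detected with the flip-flop device already used in the proof of Theorem~\ref{thm:itcapower}: a pair of origin cells alternating between $\mathbf{0}$ and another state collapse to $\mathbf{0}$ precisely at the limit, which lets the automaton recognise that stage $\omega$ has been reached and place $\mathbf{h}$ on the configuration now coding $F(x)$ (up to the one extra successor step that limit-detection always forces, since $\liminf^{\prec}$ can only reproduce values occurring cofinally below $\omega$). The main obstacle, and the whole content of the theorem, is the tension just described: an exact comparison is an inherently $\liminf$-resolved act, and BSS chains such comparisons in a data-dependent way, so a faithful but naive simulation spends one limit per comparison and overshoots $\omega$. Forcing the bound to be $\omega$ is what compels the two devices above---pipelining the bounded-depth arithmetic through a single precision front, and replacing sequential branching by a depth-$t$ speculative unfolding---so that every carry and every comparison along the finite true path is settled simultaneously by the one limit at $\omega$; checking that the $\liminf^{\prec}$ of the speculative tree really reconstructs the canonical codes of the true trajectory, dyadic boundary cases included, is where the real care is needed.
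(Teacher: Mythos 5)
Your layout, your pipelined ``precision front'' for the arithmetic, and your insistence that everything be resolved by a single limit all match the spirit of the paper's construction, which likewise runs one concurrent Turing ``thread'' per executed instruction so that an unbounded number of heads do, in time $\omega$, the work an ITTM would need $\omega^2$ for. But you diverge from the paper at the decisive point, the comparisons, and your mechanism there has a genuine gap. The paper does \emph{not} spawn both outcomes of a test: it commits to the one default hypothesis that can only be \emph{refuted} at a finite stage (e.g.\ ``$x_0=0$'' for a zero-test), runs all arithmetic reversibly, and backtracks --- undoing newer threads through a cascade and restarting the control from the other branch --- the moment a refutation appears. Since every wrong hypothesis is refuted at some finite time and each finite initial segment of the tape is eventually left untouched, the true execution path and every digit of the output \emph{stabilize} before $\omega$; the limit then merely reads off eventual values, and the halting configuration carries the coding of $F(x)$ in the canonical position. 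In your scheme the true branch is only \emph{identified at} stage $\omega$: the verdict bits are themselves $\liminf$-resolved, so at time $\omega$ the configuration holds an execution tree of which one leaf is correct, and that leaf's (infinite) data is not sitting in the agreed output coding. Selecting it and relocating an infinite word cannot be done in the ``one extra successor step'' you allow; it costs at least another $\omega$, which breaks both the stated bound and the definition of computing $F$.

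Two further points would need repair even if the selection problem were solved. First, your claim that $\liminf^{\prec}$ writes ``precisely the canonical coding of that node's true real'' is exactly where the dyadic boundary cases you mention become fatal: if a digit cell oscillates forever (carries flipping it cofinally in both directions), the $\liminf$ returns the $\prec$-least value regardless of the true digit; you need the digits to \emph{stabilize}, which is what the paper's reversible guess-and-undo discipline guarantees and your wave picture does not. Second, ``unfold the execution tree to depth $t$ at step $t$'' violates the light cone: a one-dimensional CA touches only $O(t)$ cells in $t$ steps, while a depth-$t$ binary tree whose nodes each carry a tuple of reals to growing precision needs exponentially many. This particular overclaim is repairable by unfolding more slowly, but it signals that the both-branches tree is the wrong data structure here; the single-path-with-backtracking of the paper is what makes the whole construction fit in linear space and converge cellwise by stage $\omega$.
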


\begin{proof}
Consider a BSS machine. At each time step $t$ of a computation, only a finite number of non-zero real numbers are defined: $k$ constants inside the program and $l\leqslant t$ cells of the tape. For the sake of clarity, suppose that the BSS machine works on reals in the interval $[-1,1]$\footnote{The construction extends to $\mathbb{R}$ at the cost of non significant tricks, for example by adding just after the bit of sign the encoding of the integer part of each real on a same number of bits followed by a dot.}. Imagine that we encode these $k+l$ reals as infinite words on the alphabet $\left\{0,1\right\}$ encoding a bit of sign followed by their binary expansion.

Each computation instruction of the BSS machine has the nice property that it can be computed, in time $\omega$, by a Turing machine working synchronously on the representation of its operands. For each finite initial portion of the tape, it is left untouched by such a machine after some finite time. Moreover, at the cost of some extra bookkeeping on the tape, such a computation can be achieved in a reversible way\footnote{Just store the non injective choices on a stack that the head pushes in front of itself.}. 

Each branch instruction of the BSS machine can be achieved in a similar way: one can choose a initial hypothesis on the branch\footnote{For branch if equal to zero the hypothesis is that $x_0=0$.} and start a computation by a Turing machine working synchronously on the representation of the operands ; either the machine eventually halts contradicting the hypothesis, or its head moves infinitely towards the end of the infinite words. For each finite initial portion of the tape, it is left untouched by such a machine after some finite time and the computation can be achieved in a reversible way. 

Packing it all together, we can simulate a BSS machine if we can launch as many Turing computation threads as needed. Encode the $k+l$ reals encodings as a single infinite word and put some finite control at the beginning of it. The finite control plays the role of the head of the BSS machine, keeping the current state (instruction number). The control is responsible for launching the next instruction: each time there is enough room on its right, it starts a new thread for the current instruction. If the instruction is a move instruction, the thread simply selects the next cell, adding a new $0$ real to the tuple if necessary. If the instruction is a branch instruction, the control takes the default hypothesis on the result and launches the branch thread described before. If the instruction is a computation instruction, the control launches the thread described above. At each time step, the control is pointing to a current instruction and a finite number of threads are computing on its right. Each thread works on a finite portion of tape where it should have exclusive access. Somewhere on its right is the last point where he modified the reals. When a thread wants to access a portion of the reals already modified by the next thread on its left, it enforce this thread to undo its computation to restore the reals as they were before. A cascade of undoing occurs in such a situation, each thread undoing the next thread. At some point an undone thread reaches the control and forces the control part to start again from the instruction where this thread was started, removing the thread from the computing area. As each thread eventually goes to infinity, such an inefficient compute/uncompute dance converges. The same applies when a branch thread discovers that its hypothesis was wrong: it goes back to control to take the other branch of computation, forcing threads newer than him to undo. Notice that a key point of the construction is that there is always enough room for bookkeeping: if a thread needs more space, it just can wait for more space to be available before continuing its computation, either it will eventually happen, or a backtrack will occur that can be handled.

The result of a computation is obtained easily: at time $\omega$, if the control eventually converged to an accepting state, the control part encodes an accepting state and the encoded reals contain the result of the computation ; if the BSS machine did not converge, the control part does not encode an accepting state.

Let us now explain how the described simulation can be carried on a ITCA. Given a BSS machine, the ITCA is constructed as follows. Only a semi-infinite part of the configuration is used. Cell $\#0$ encodes control and the cells on its right encode the reals, the computation area. The computation area is constructed in 3 layers, as depicted on figure~\ref{fig:layers}. The data layer encodes the bits of real numbers and the current position of the head: it is divided into blocks of length $k+l$, separated by $\#$ border symbols, containing a bit for each real, the current position of the head being circled. The stack layer encodes the working area for the threads, where each head has its computing stack, and the boundaries of the threads: each thread delimits monotonically the area it already modified by a $*$ symbol. The head layer encodes the heads of the threads, the active parts of the ITCA.

Each instruction of the BSS machine is simulated as explained before. A move thread simply moves the circle to the previous or next bit, adding a new real if necessary, using its thread stack. A branch thread does not modify reals but checks if the hypothesis was true or false, as depicted on figure~\ref{fig:branch}. A computation thread modifies the reals, making choices (for example the values of carries when adding two reals), backtracking when the choice was a bad one, as depicted for addition on figure~\ref{sfig:add}. Notice that multiplication can be significantly simplified by the fact that we can create new threads, thus it can be decomposed into additions launched by a master thread, as depicted on figure~\ref{sfig:mul}.

It is important to notice that the monotonicity of the forward movement of $*$ symbols ensures that there is no concurrency problem due to neighbor threads backtracking and coming back forward in a same area: when a thread wants to access an area already explored by its follower, the follower is asked to undo its computation. To ask a neighbor to undo, a thread simply modifies its neighbor $*$ symbol to inform him.

The details of the construction use rather classical but tedious CA encoding tricks. The key argument of the proof is that an ITCA can simulate in time $\omega$ the work of an unbounded number of Turing heads, thus achieving the same quantity of work than a ITTM in time $\omega^2$.
\end{proof}

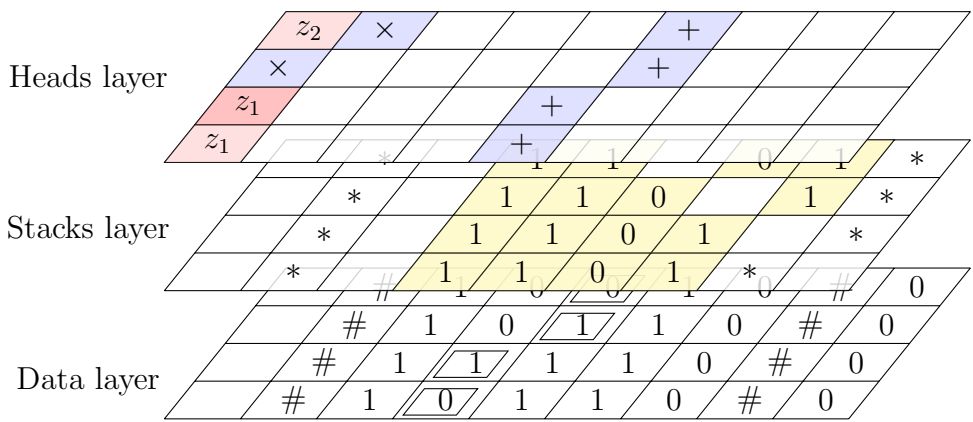
\begin{figure}
\centering
\begin{tikzpicture}
  \begin{scope}
  \pgftransformcm{1}{0}{0.4}{0.5}{\pgfpoint{0cm}{0cm}}
  \fill[white,opacity=0.8] (-1,0) rectangle (8,4);
  \draw [black,step=1cm] (-1,0) grid (8,4);
  \draw (0.5, 0.5) node {$\#$};
  \draw (1.5, 0.5) node {$1$};
  \draw (2.5, 0.5) node {$0$};
  \draw (2.1, 0.1) rectangle (2.8, 0.8);
  \draw (3.5, 0.5) node {$1$};
  \draw (4.5, 0.5) node {$1$};
  \draw (5.5, 0.5) node {$0$};
  \draw (6.5, 0.5) node {$\#$};
  \draw (7.5, 0.5) node {$0$};
  \draw (0.5, 1.5) node {$\#$};
  \draw (1.5, 1.5) node {$1$};
  \draw (2.5, 1.5) node {$1$};
  \draw (2.1, 1.1) rectangle (2.8, 1.8);
  \draw (3.5, 1.5) node {$1$};
  \draw (4.5, 1.5) node {$1$};
  \draw (5.5, 1.5) node {$0$};
  \draw (6.5, 1.5) node {$\#$};
  \draw (7.5, 1.5) node {$0$};
  \draw (0.5, 2.5) node {$\#$};
  \draw (1.5, 2.5) node {$1$};
  \draw (2.5, 2.5) node {$0$};
  \draw (3.5, 2.5) node {$1$};
  \draw (3.1, 2.1) rectangle (3.8, 2.8);
  \draw (4.5, 2.5) node {$1$};
  \draw (5.5, 2.5) node {$0$};
  \draw (6.5, 2.5) node {$\#$};
  \draw (7.5, 2.5) node {$0$};
  \draw (0.5, 3.5) node {$\#$};
  \draw (1.5, 3.5) node {$1$};
  \draw (2.5, 3.5) node {$0$};
  \draw (3.5, 3.5) node {$0$};
  \draw (3.1, 3.1) rectangle (3.8, 3.8);
  \draw (4.5, 3.5) node {$1$};
  \draw (5.5, 3.5) node {$0$};
  \draw (6.5, 3.5) node {$\#$};
  \draw (7.5, 3.5) node {$0$};
  \end{scope}

  \begin{scope}
  \pgftransformcm{1}{0}{0.4}{0.5}{\pgfpoint{0cm}{1.7cm}}
  \fill[white,opacity=0.8] (-1,0) rectangle (8,4);
  \draw [black,step=1cm] (-1,0) grid (8,4);
  \draw (0.5, 0.5) node {$*$};
  \draw (1.5, 0.5) node {};
  \draw[fill=yellow!40, opacity=0.6] (2,0) rectangle (3,1);
  \draw (2.5, 0.5) node {$1$};
  \draw[fill=yellow!40, opacity=0.6] (3,0) rectangle (4,1);
  \draw (3.5, 0.5) node {$1$};
  \draw[fill=yellow!40, opacity=0.6] (4,0) rectangle (5,1);
  \draw (4.5, 0.5) node {$0$};
  \draw[fill=yellow!40, opacity=0.6] (5,0) rectangle (6,1);
  \draw (5.5, 0.5) node {$1$};
  \draw (6.5, 0.5) node {$*$};
  \draw (7.5, 0.5) node {$ $};

  \draw (0.5, 1.5) node {$*$};
  \draw (1.5, 1.5) node {};
  \draw[fill=yellow!40, opacity=0.6] (2,1) rectangle (3,2);
  \draw (2.5, 1.5) node {$1$};
  \draw[fill=yellow!40, opacity=0.6] (3,1) rectangle (4,2);
  \draw (3.5, 1.5) node {$1$};
  \draw[fill=yellow!40, opacity=0.6] (4,1) rectangle (5,2);
  \draw (4.5, 1.5) node {$0$};
  \draw[fill=yellow!40, opacity=0.6] (5,1) rectangle (6,2);
  \draw (5.5, 1.5) node {$1$};
  \draw (6.5, 1.5) node {$ $};
  \draw (7.5, 1.5) node {$*$};

  \draw (0.5, 2.5) node {$*$};
  \draw (1.5, 2.5) node {};
  \draw[fill=yellow!40, opacity=0.6] (2,2) rectangle (3,3);
  \draw (2.5, 2.5) node {$1$};
  \draw[fill=yellow!40, opacity=0.6] (3,2) rectangle (4,3);
  \draw (3.5, 2.5) node {$1$};
  \draw[fill=yellow!40, opacity=0.6] (4,2) rectangle (5,3);
  \draw (4.5, 2.5) node {$0$};
  \draw (5.5, 2.5) node {$ $};
  \draw[fill=yellow!40, opacity=0.6] (6,2) rectangle (7,3);
  \draw (6.5, 2.5) node {$1$};
  \draw (7.5, 2.5) node {$*$};

  \draw (0.5, 3.5) node {$*$};
  \draw (1.5, 3.5) node {};
  \draw[fill=yellow!40, opacity=0.6] (2,3) rectangle (3,4);
  \draw (2.5, 3.5) node {$1$};
  \draw[fill=yellow!40, opacity=0.6] (3,3) rectangle (4,4);
  \draw (3.5, 3.5) node {$1$};
  \draw (4.5, 3.5) node {$ $};
  \draw[fill=yellow!40, opacity=0.6] (5,3) rectangle (6,4);
  \draw (5.5, 3.5) node {$0$};
  \draw[fill=yellow!40, opacity=0.6] (6,3) rectangle (7,4);
  \draw (6.5, 3.5) node {$1$};
  \draw (7.5, 3.5) node {$*$};
  \end{scope}

  \begin{scope}
  \pgftransformcm{1}{0}{0.4}{0.5}{\pgfpoint{0cm}{3.4cm}}
  \fill[white,opacity=0.8] (-1,0) rectangle (8,4);
  \draw [black,step=1cm] (-1,0) grid (8,4);
  \draw[fill=red!20, opacity=0.6] (-1,0) rectangle (0,1);
  \draw (-0.5,0.5) node {$z_1$};
  \draw[fill=red!40, opacity=0.6] (-1,1) rectangle (0,2);
  \draw (-0.5,1.5) node {$z_1$};
  \draw[fill=blue!20, opacity=0.6] (-1,2) rectangle (0,3);
  \draw (-0.5,2.5) node {$\times$};
  \draw[fill=red!20, opacity=0.6] (-1,3) rectangle (0,4);
  \draw (-0.5,3.5) node {$z_2$};

  \draw[fill=blue!20, opacity=0.6] (0,3) rectangle (1,4);
  \draw (0.5,3.5) node {$\times$};

  \draw[fill=blue!20, opacity=0.6] (3,0) rectangle (4,1);
  \draw (3.5,0.5) node {$+$};
  \draw[fill=blue!20, opacity=0.6] (3,1) rectangle (4,2);
  \draw (3.5,1.5) node {$+$};
  \draw[fill=blue!20, opacity=0.6] (4,2) rectangle (5,3);
  \draw (4.5,2.5) node {$+$};
  \draw[fill=blue!20, opacity=0.6] (4,3) rectangle (5,4);
  \draw (4.5,3.5) node {$+$};
  \end{scope}

\draw (-2,4.5) node {Heads layer};
\draw (-2,2.5) node {Stacks layer};
\draw (-2,0.5) node {Data layer};

\end{tikzpicture}
\caption{Layers encoding computation}\label{fig:layers}
\end{figure}

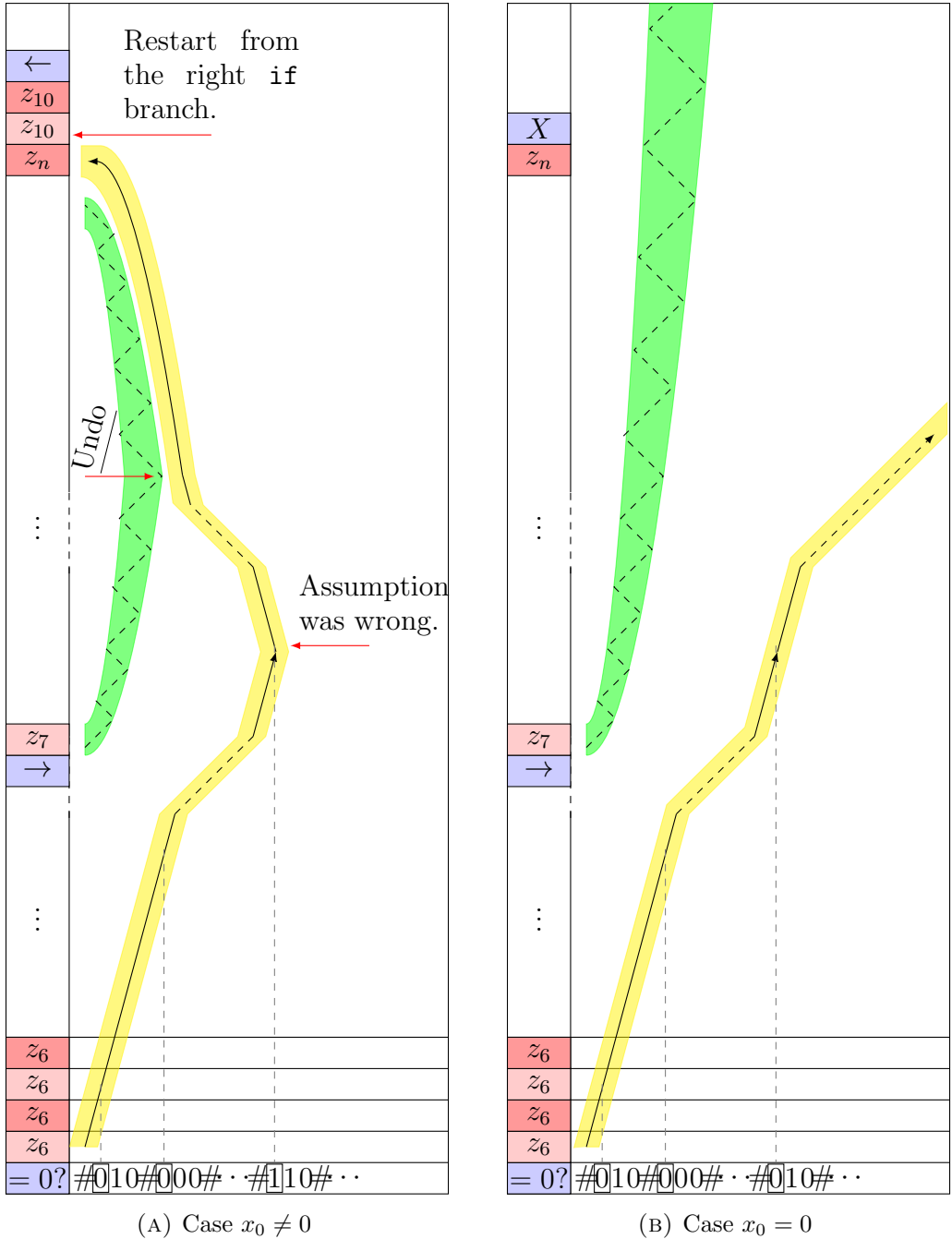
\begin{figure}
\centering
\subfloat[][Case $x_0\neq 0$]{%
\begin{tikzpicture}[scale=0.9]
  \newlength{\rh}
  \setlength{\rh}{19cm}
  \newlength{\rw}
  \setlength{\rw}{7cm}
  \newlength{\lh}
  \setlength{\lh}{0.5cm}
  \newlength{\cw}
  \setlength{\cw}{0.5cm}
  \draw [black,fill=white] (0,0) rectangle (\rw,\rh);
  \draw  (2\cw,0) -- (2\cw,6.1);
  \draw [dashed] (2\cw,6.0) -- (2\cw,7.3);
  \draw  (2\cw,7.3) -- (2\cw,10);
  \draw [dashed] (2\cw,9.9) -- (2\cw,11.2);
  \draw  (2\cw,11.2) -- (2\cw,\rh);

  \foreach \i in {1, 2, 3, 4, 5} {
    \draw [black] (0,\lh*\i) -- (\rw,\lh*\i);
  }

  \draw [black] (2\cw+01*0.5\cw, 0.5\lh) node {$\#$};
  \draw [black] (2\cw+02*0.5\cw, 0.5\lh) node {$0$};
  \draw (2.75\cw, 0.05\lh) rectangle (3.25\cw, 0.95\cw);
  \draw [black] (2\cw+03*0.5\cw, 0.5\lh) node {$1$};
  \draw [black] (2\cw+04*0.5\cw, 0.5\lh) node {$0$};
  \draw [black] (2\cw+05*0.5\cw, 0.5\lh) node {$\#$};
  \draw [black] (2\cw+06*0.5\cw, 0.5\lh) node {$0$};
  \draw (4.75\cw, 0.05\lh) rectangle (5.25\cw, 0.95\cw);
  \draw [black] (2\cw+07*0.5\cw, 0.5\lh) node {$0$};
  \draw [black] (2\cw+08*0.5\cw, 0.5\lh) node {$0$};
  \draw [black] (2\cw+09*0.5\cw, 0.5\lh) node {$\#$};
  \draw [black] (2\cw+10.5*0.5\cw, 0.5\lh) node {$\cdots$};
  \draw [black] (2\cw+12*0.5\cw, 0.5\lh) node {$\#$};
  \draw [black] (2\cw+13*0.5\cw, 0.5\lh) node {$1$};
  \draw (8.25\cw, 0.05\lh) rectangle (8.75\cw, 0.95\cw);
  \draw [black] (2\cw+14*0.5\cw, 0.5\lh) node {$1$};
  \draw [black] (2\cw+15*0.5\cw, 0.5\lh) node {$0$};
  \draw [black] (2\cw+16*0.5\cw, 0.5\lh) node {$\#$};
  \draw [black] (2\cw+17.5*0.5\cw, 0.5\lh) node {$\cdots$};

  \draw [fill=blue!20] (0,0\lh) rectangle (2\cw,1\lh);
  \draw (\cw, 0.5\lh) node {$=0?$};
  \draw [fill=red!20] (0,1\lh) rectangle (2\cw,2\lh);
  \draw (\cw, 1.5\lh) node {$z_6$};
  \draw [fill=red!40] (0,2\lh) rectangle (2\cw,3\lh);
  \draw (\cw, 2.5\lh) node {$z_6$};
  \draw [fill=red!20] (0,3\lh) rectangle (2\cw,4\lh);
  \draw (\cw, 3.5\lh) node {$z_6$};
  \draw [fill=red!40] (0,4\lh) rectangle (2\cw,5\lh);
  \draw (\cw, 4.5\lh) node {$z_6$};

  \draw (\cw, 9\lh) node { \vdots };

  \draw [fill=blue!20] (0,13\lh) rectangle (2\cw,14\lh);
  \draw (\cw, 13.5\lh) node {$\rightarrow$};
  \draw [fill=red!20] (0,14\lh) rectangle (2\cw,15\lh);
  \draw (\cw, 14.5\lh) node {$z_7$};

  \draw (\cw, 21.5\lh) node { \vdots };

  \draw [fill=red!40] (0,32.5\lh) rectangle (2\cw,33.5\lh);
  \draw (\cw, 33\lh) node {$z_n$};
  \draw [fill=red!20] (0,33.5\lh) rectangle (2\cw,34.5\lh);
  \draw (\cw, 34\lh) node {$z_{10}$};
  \draw [fill=red!40] (0,34.5\lh) rectangle (2\cw,35.5\lh);
  \draw (\cw, 35\lh) node {$z_{10}$};
  \draw [fill=blue!20] (0,35.5\lh) rectangle (2\cw,36.5\lh);
  \draw (\cw, 36\lh) node {$\leftarrow$};

  \draw[draw=yellow, fill=yellow, opacity=0.5] (2.5\cw, 1.5\lh) -- ++(0:0.4\lh)
-- ++(75:5.5) -- ++(45:1.75) -- ++(75:1.4) -- ++(105:1.4) -- ++(135:1.4) --
++(105:0.5) parabola[bend at end] ++(-3\cw, 10.5\lh)
-- ++(0:-0.6\lh) -- ++(90:-0.5) parabola ++(2.9\cw, -10.4\lh)
-- ++(135:-1.45) -- ++(105:-1.4) --
++(75:-1.4) -- ++(45:-1.75) -- ++(75:-5.5);

  \draw[draw=green, fill=green, opacity=0.5] (2.5\cw, 14\lh) parabola
++(2.45\cw, 8.9\lh) parabola[bend at end] ++(-2.45\cw, 8.9\lh) -- ++(0,-0.5)
parabola ++(1.25\cw, -7.9\lh) parabola[bend at end] ++(-1.25\cw, -7.9\lh) --
(2.5\cw, 14\lh);
  \draw[dashed] (2.5\cw, 14.25\lh) -- ++(45:0.65) -- ++(135:0.4) -- ++(45:0.7)
-- ++(135:0.5) -- ++(45:0.77) -- ++(135:0.6) -- ++(45:0.85) -- ++(135:0.7) --
++(45:0.95)
-- ++(135:0.95) -- ++(45:0.7) -- ++(135:0.85) -- ++(45:0.6) -- ++(135:0.77) --
++(45:0.5) -- ++(135:0.7) -- ++(45:0.4) -- ++(135:0.65);

  \fill (3\cw, 23\lh) -- node[sloped,above] {Undo} ++(0.5\cw, 2\lh);
  \draw[red, <-, >=latex] (4.7\cw, 22.9\lh) -- (2.5\cw, 22.9\lh);
  \fill (11.5\cw, 17.5\lh) node[sloped,above] {\parbox{2cm}{Assumption was wrong.}} ++(0.5\cw, 2\lh);
  \draw[red, ->, >=latex] (11.5\cw, 17.5\lh) -- (9.0\cw, 17.5\lh);

  \fill (6.5\cw, 34.0\lh) node[sloped,above] {\parbox{2.5cm}{Restart from the right \texttt{if} branch.}} ++(0.5\cw, 2\lh);
  \draw[red, ->, >=latex] (6.5\cw, 33.8\lh) -- (2.1\cw, 33.8\lh);

  \draw (2.5\cw, 1.5\lh) -- ++(75:5.5);
  \draw[dashed] (2.5\cw, 1.5\lh)++(75:5.5) -- ++(45:1.75);
  \draw[->, >=latex] (2.5\cw, 1.5\lh) ++(75:5.5)++(45:1.75) -- ++(75:1.4);
  \draw (2.5\cw, 1.5\lh)++(75:5.5)++(45:1.75)++(75:1.4) -- ++(105:1.4);
  \draw[dashed] (2.5\cw, 1.5\lh)++(75:5.5)++(45:1.75)++(75:1.4)++(105:1.4) -- ++(135:1.4);
  \draw (2.5\cw, 1.5\lh)++(75:5.5)++(45:1.75)++(75:1.4)++(105:1.4)++(135:1.4) -- ++(105:0.5);

  \draw[<-, >=latex] (2.5\cw, 1.5\lh)++(75:5.5)++(45:1.75)++(75:1.4)++(105:1.4)++(135:1.4)++(105:0.5)
  ++(-3\cw, 10\lh) parabola ++(3\cw, -10\lh);

  \draw[dashed,gray] (3\cw, 3.5\lh) -- (3\cw,\lh);
  \draw[dashed,gray] (5\cw, 11\lh) -- (5\cw,\lh);
  \draw[dashed,gray] (8.5\cw, 17.5\lh) -- (8.5\cw,\lh);
\end{tikzpicture}}\hfil
\subfloat[][Case $x_0=0$]{%
\begin{tikzpicture}[scale=0.9]
  \setlength{\rh}{19cm}
  \setlength{\rw}{7cm}
  \setlength{\lh}{0.5cm}
  \setlength{\cw}{0.5cm}
  \draw [black,fill=white] (0,0) rectangle (\rw,\rh);
  \draw  (2\cw,0) -- (2\cw,6.1);
  \draw [dashed] (2\cw,6.0) -- (2\cw,7.3);
  \draw  (2\cw,7.3) -- (2\cw,10);
  \draw [dashed] (2\cw,9.9) -- (2\cw,11.2);
  \draw  (2\cw,11.2) -- (2\cw,\rh);

  \foreach \i in {1, 2, 3, 4, 5} {
    \draw [black] (0,\lh*\i) -- (\rw,\lh*\i);
  }

  \draw [black] (2\cw+01*0.5\cw, 0.5\lh) node {$\#$};
  \draw [black] (2\cw+02*0.5\cw, 0.5\lh) node {$0$};
  \draw (2.75\cw, 0.05\lh) rectangle (3.25\cw, 0.95\cw);
  \draw [black] (2\cw+03*0.5\cw, 0.5\lh) node {$1$};
  \draw [black] (2\cw+04*0.5\cw, 0.5\lh) node {$0$};
  \draw [black] (2\cw+05*0.5\cw, 0.5\lh) node {$\#$};
  \draw [black] (2\cw+06*0.5\cw, 0.5\lh) node {$0$};
  \draw (4.75\cw, 0.05\lh) rectangle (5.25\cw, 0.95\cw);
  \draw [black] (2\cw+07*0.5\cw, 0.5\lh) node {$0$};
  \draw [black] (2\cw+08*0.5\cw, 0.5\lh) node {$0$};
  \draw [black] (2\cw+09*0.5\cw, 0.5\lh) node {$\#$};
  \draw [black] (2\cw+10.5*0.5\cw, 0.5\lh) node {$\cdots$};
  \draw [black] (2\cw+12*0.5\cw, 0.5\lh) node {$\#$};
  \draw [black] (2\cw+13*0.5\cw, 0.5\lh) node {$0$};
  \draw (8.25\cw, 0.05\lh) rectangle (8.75\cw, 0.95\cw);
  \draw [black] (2\cw+14*0.5\cw, 0.5\lh) node {$1$};
  \draw [black] (2\cw+15*0.5\cw, 0.5\lh) node {$0$};
  \draw [black] (2\cw+16*0.5\cw, 0.5\lh) node {$\#$};
  \draw [black] (2\cw+17.5*0.5\cw, 0.5\lh) node {$\cdots$};

  \draw [fill=blue!20] (0,0\lh) rectangle (2\cw,1\lh);
  \draw (\cw, 0.5\lh) node {$=0?$};
  \draw [fill=red!20] (0,1\lh) rectangle (2\cw,2\lh);
  \draw (\cw, 1.5\lh) node {$z_6$};
  \draw [fill=red!40] (0,2\lh) rectangle (2\cw,3\lh);
  \draw (\cw, 2.5\lh) node {$z_6$};
  \draw [fill=red!20] (0,3\lh) rectangle (2\cw,4\lh);
  \draw (\cw, 3.5\lh) node {$z_6$};
  \draw [fill=red!40] (0,4\lh) rectangle (2\cw,5\lh);
  \draw (\cw, 4.5\lh) node {$z_6$};

  \draw (\cw, 9\lh) node { \vdots };

  \draw [fill=blue!20] (0,13\lh) rectangle (2\cw,14\lh);
  \draw (\cw, 13.5\lh) node {$\rightarrow$};
  \draw [fill=red!20] (0,14\lh) rectangle (2\cw,15\lh);
  \draw (\cw, 14.5\lh) node {$z_7$};

  \draw (\cw, 21.5\lh) node { \vdots };

  \draw [fill=red!40] (0,32.5\lh) rectangle (2\cw,33.5\lh);
  \draw (\cw, 33\lh) node {$z_n$};
  \draw [fill=blue!20] (0,33.5\lh) rectangle (2\cw,34.5\lh);
  \draw (\cw, 34\lh) node {$X$};

  \draw[draw=yellow, fill=yellow, opacity=0.5] (2.5\cw, 1.5\lh) -- ++(0:0.4\lh)
-- ++(75:5.5) -- ++(45:1.75) -- ++(75:1.4) -- ++(75:1.4) -- ++(45:3) --
++(90:0.5) -- ++(45:-0.5) -- ++(45:-3) -- ++(75:-1.4) --
++(75:-1.4) -- ++(45:-1.75) -- ++(75:-5.67);

  \draw[draw=green, fill=green, opacity=0.5] (2.5\cw, 14\lh) parabola
++(4\cw, 24\lh) -- ++(0:-1) parabola[bend at end] ++(-2.0\cw, -23.0\lh) --
(2.5\cw, 14\lh);

  \draw[dashed] (2.5\cw, 14.25\lh) -- ++(45:0.65) -- ++(135:0.4) -- ++(45:0.7)
-- ++(135:0.5) -- ++(45:0.77) -- ++(135:0.6) -- ++(45:0.85) -- ++(135:0.7) --
++(45:0.95)
-- ++(135:0.90) -- ++(45:1.1) -- ++(135:0.85) -- ++(45:1.1) -- ++(135:1.0) --
++(45:1.3) -- ++(135:1.2) -- ++(45:1.4) -- ++(135:1.25) -- ++(45:0.5);

  \draw (2.5\cw, 1.5\lh) -- ++(75:5.5);
  \draw[dashed] (2.5\cw, 1.5\lh) ++(75:5.5) -- ++(45:1.75);
  \draw[->, >=latex] (2.5\cw, 1.5\lh) ++(75:5.5)++(45:1.75) -- ++(75:1.4);
  \draw (2.5\cw, 1.5\lh) ++(75:5.5)++(45:1.75)++(75:1.4) -- ++(75:1.4);
  \draw[->, >=latex,dashed] (2.5\cw, 1.5\lh) ++(75:5.5)++(45:1.75)++(75:1.4)++(75:1.4) -- ++(45:3);
  \draw[dashed,gray] (3\cw, 3.5\lh) -- (3\cw,\lh);
  \draw[dashed,gray] (5\cw, 11\lh) -- (5\cw,\lh);
  \draw[dashed,gray] (8.5\cw, 17.5\lh) -- (8.5\cw,\lh);
\end{tikzpicture}}
\caption{\emph{Branch if equal to 0} thread}\label{fig:branch}
\end{figure}

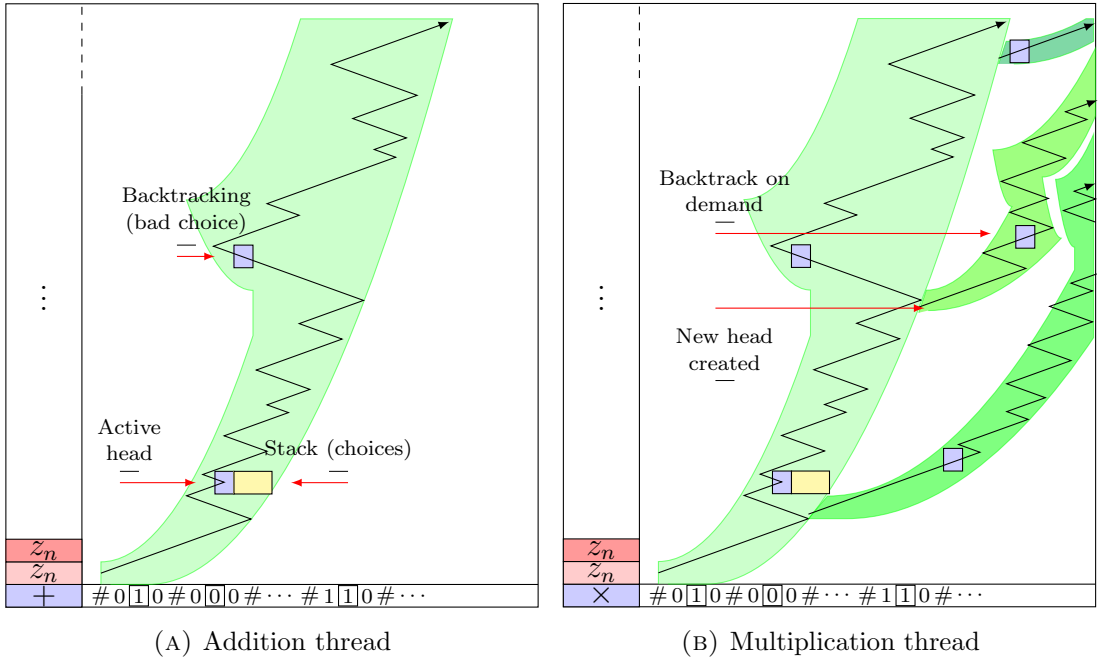
\begin{figure}
\centering
\subfloat[][Addition thread]{\label{sfig:add}%
\begin{tikzpicture}
  \setlength{\rh}{8cm}
  \setlength{\rw}{7cm}
  \setlength{\lh}{0.3cm}
  \setlength{\cw}{0.5cm}
  \draw [black,fill=white] (0,0) rectangle (\rw,\rh);
  \draw  (2\cw,0) -- (2\cw,\rh-4\lh);
  \draw[dashed] (2\cw,\rh-4\lh) -- (2\cw,\rh);

  \draw (0,\lh) -- (\rw,\lh);

  \draw [black] (2\cw+01*0.5\cw, 0.5\lh) node {\tiny $\#$};
  \draw [black] (2\cw+02*0.5\cw, 0.5\lh) node {\tiny $0$};
  \draw [black] (2\cw+03*0.5\cw, 0.5\lh) node {\tiny $1$};
  \draw (3.25\cw, 0.05\lh) rectangle (3.75\cw, 0.95\lh);
  \draw [black] (2\cw+04*0.5\cw, 0.5\lh) node {\tiny $0$};
  \draw [black] (2\cw+05*0.5\cw, 0.5\lh) node {\tiny $\#$};
  \draw [black] (2\cw+06*0.5\cw, 0.5\lh) node {\tiny $0$};
  \draw [black] (2\cw+07*0.5\cw, 0.5\lh) node {\tiny $0$};
  \draw (5.25\cw, 0.05\lh) rectangle (5.75\cw, 0.95\lh);
  \draw [black] (2\cw+08*0.5\cw, 0.5\lh) node {\tiny $0$};
  \draw [black] (2\cw+09*0.5\cw, 0.5\lh) node {\tiny $\#$};
  \draw [black] (2\cw+10.5*0.5\cw, 0.5\lh) node {\tiny $\cdots$};
  \draw [black] (2\cw+12*0.5\cw, 0.5\lh) node {\tiny $\#$};
  \draw [black] (2\cw+13*0.5\cw, 0.5\lh) node {\tiny $1$};
  \draw [black] (2\cw+14*0.5\cw, 0.5\lh) node {\tiny $1$};
  \draw (8.75\cw, 0.05\lh) rectangle (9.25\cw, 0.95\lh);
  \draw [black] (2\cw+15*0.5\cw, 0.5\lh) node {\tiny $0$};
  \draw [black] (2\cw+16*0.5\cw, 0.5\lh) node {\tiny $\#$};
  \draw [black] (2\cw+17.5*0.5\cw, 0.5\lh) node {\tiny $\cdots$};

  \draw [fill=blue!20] (0,0\lh) rectangle (2\cw,1\lh);
  \draw (\cw, 0.5\lh) node {$+$};
  \draw [fill=red!20] (0,1\lh) rectangle (2\cw,2\lh);
  \draw (\cw, 1.5\lh) node {$z_n$};
  \draw [fill=red!40] (0,2\lh) rectangle (2\cw,3\lh);
  \draw (\cw, 2.5\lh) node {$z_n$};

  \draw (\cw, 14\lh) node { \vdots };

  \draw[draw=green, fill=green!40, opacity=0.5] (2.5\cw, 1\lh) -- ++(0,1\lh) parabola
++(4.0\cw, 10.0\lh) -- ++(90:2\lh) parabola ++(-1.75\cw, 4\lh) parabola ++(3\cw,
8\lh) -- ++(0:4\cw) parabola[bend at end] ++(-8\cw, -25\lh) -- ++(0:-1\cw);

  \draw[fill=blue!20] (5.5\cw, 5\lh) rectangle (6\cw, 6\lh);
  \draw[fill=yellow!40] (6\cw, 5\lh) rectangle (7\cw, 6\lh);
  \draw[fill=blue!20] (6.0\cw, 15\lh) rectangle (6.5\cw, 16\lh);

  \draw[->, >=latex] (2.5\cw, 1.5\lh) -- ++(20:7\lh) -- ++(160:3\lh) --
++(20:1.75\lh) -- ++(160:1\lh) -- ++(20:3\lh) -- ++(160:2\lh) -- ++(20:3\lh) --
++(160:1\lh) -- ++(20:2\lh) -- ++(160:2.5\lh) -- ++(20:4\lh) -- ++(160:2\lh) --
++(20:3\lh) -- ++(160:7\lh) -- ++(20:4\lh) -- ++(160:1.5\lh) -- ++(20:6\lh) --
++(160:1\lh) -- ++(20:1.5\lh) -- ++(160:2.5\lh) -- ++(20:3\lh) -- ++(160:4\lh)
-- ++(20:5.5\lh);

  \fill (3\cw, 6\lh) -- node[sloped,above] {\parbox{1cm}{\centering\tiny Active head}} ++(0.5\cw, 0);
  \draw[red, ->, >=latex] (3\cw, 5.5\lh) -- (5.0\cw, 5.5\lh);
  \fill (8.5\cw, 6.0\lh) -- node[sloped,above] {\parbox{2cm}{\centering\tiny Stack (choices)}} ++(0.5\cw, 0);
  \draw[red, ->, >=latex] (9\cw, 5.5\lh) -- (7.5\cw, 5.5\lh);
  \fill (4.5\cw, 16.0\lh) -- node[sloped,above] {\parbox{2cm}{\centering\tiny Backtracking (bad choice)}} ++(0.5\cw, 0);
  \draw[red, ->, >=latex] (4.5\cw, 15.5\lh) -- (5.5\cw, 15.5\lh);

\end{tikzpicture}}\hfil
\subfloat[][Multiplication thread]{\label{sfig:mul}%
\begin{tikzpicture}
  \setlength{\rh}{8cm}
  \setlength{\rw}{7cm}
  \setlength{\lh}{0.3cm}
  \setlength{\cw}{0.5cm}
  \draw [black,fill=white] (0,0) rectangle (\rw,\rh);
  \draw  (2\cw,0) -- (2\cw,\rh-4\lh);
  \draw[dashed] (2\cw,\rh-4\lh) -- (2\cw,\rh);

  \draw (0,\lh) -- (\rw,\lh);

  \draw [black] (2\cw+01*0.5\cw, 0.5\lh) node {\tiny $\#$};
  \draw [black] (2\cw+02*0.5\cw, 0.5\lh) node {\tiny $0$};
  \draw [black] (2\cw+03*0.5\cw, 0.5\lh) node {\tiny $1$};
  \draw (3.25\cw, 0.05\lh) rectangle (3.75\cw, 0.95\lh);
  \draw [black] (2\cw+04*0.5\cw, 0.5\lh) node {\tiny $0$};
  \draw [black] (2\cw+05*0.5\cw, 0.5\lh) node {\tiny $\#$};
  \draw [black] (2\cw+06*0.5\cw, 0.5\lh) node {\tiny $0$};
  \draw [black] (2\cw+07*0.5\cw, 0.5\lh) node {\tiny $0$};
  \draw (5.25\cw, 0.05\lh) rectangle (5.75\cw, 0.95\lh);
  \draw [black] (2\cw+08*0.5\cw, 0.5\lh) node {\tiny $0$};
  \draw [black] (2\cw+09*0.5\cw, 0.5\lh) node {\tiny $\#$};
  \draw [black] (2\cw+10.5*0.5\cw, 0.5\lh) node {\tiny $\cdots$};
  \draw [black] (2\cw+12*0.5\cw, 0.5\lh) node {\tiny $\#$};
  \draw [black] (2\cw+13*0.5\cw, 0.5\lh) node {\tiny $1$};
  \draw [black] (2\cw+14*0.5\cw, 0.5\lh) node {\tiny $1$};
  \draw (8.75\cw, 0.05\lh) rectangle (9.25\cw, 0.95\lh);
  \draw [black] (2\cw+15*0.5\cw, 0.5\lh) node {\tiny $0$};
  \draw [black] (2\cw+16*0.5\cw, 0.5\lh) node {\tiny $\#$};
  \draw [black] (2\cw+17.5*0.5\cw, 0.5\lh) node {\tiny $\cdots$};

  \draw [fill=blue!20] (0,0\lh) rectangle (2\cw,1\lh);
  \draw (\cw, 0.5\lh) node {$\times$};
  \draw [fill=red!20] (0,1\lh) rectangle (2\cw,2\lh);
  \draw (\cw, 1.5\lh) node {$z_n$};
  \draw [fill=red!40] (0,2\lh) rectangle (2\cw,3\lh);
  \draw (\cw, 2.5\lh) node {$z_n$};

  \draw (\cw, 14\lh) node { \vdots };

\begin{scope}
  \draw[draw=green, fill=green!40, opacity=0.5] (2.5\cw, 1\lh) -- ++(0,1\lh) parabola
++(4.0\cw, 10.0\lh) -- ++(90:2\lh) parabola ++(-1.75\cw, 4\lh) parabola ++(3\cw,
8\lh) -- ++(0:4\cw) parabola[bend at end] ++(-8\cw, -25\lh) -- ++(0:-1\cw);

  \draw[fill=blue!20] (5.5\cw, 5\lh) rectangle (6\cw, 6\lh);
  \draw[fill=yellow!40] (6\cw, 5\lh) rectangle (7\cw, 6\lh);
  \draw[fill=blue!20] (6.0\cw, 15\lh) rectangle (6.5\cw, 16\lh);

  \draw[->, >=latex] (2.5\cw, 1.5\lh) -- ++(20:7\lh) -- ++(160:3\lh) -- ++(20:1.75\lh) --
++(160:1\lh) -- ++(20:3\lh) -- ++(160:2\lh) -- ++(20:3\lh) -- ++(160:1\lh) --
++(20:2\lh) -- ++(160:2.5\lh) -- ++(20:4\lh) -- ++(160:2\lh) -- ++(20:3\lh) --
++(160:7\lh) -- ++(20:4\lh) -- ++(160:1.5\lh) -- ++(20:6\lh) -- ++(160:1\lh) --
++(20:1.5\lh) -- ++(160:2.5\lh) -- ++(20:3\lh) -- ++(160:4\lh) -- ++(20:5.5\lh);
\end{scope}

\begin{scope}
  \draw[draw=green, fill=green, opacity=0.5] (2.5\cw, 1.5\lh)++(20:7\lh) --
++(0.5\cw,1\lh) parabola ++(6.5\cw, 10.0\lh) -- ++(90:1\lh) parabola ++(-0.5\cw,
3\lh) parabola ++(1.0\cw, 2\lh) -- ++(90:-9\lh) parabola[bend at end] ++(-6.5\cw,
-8\lh) -- ++(0:-1\cw);
  \draw[fill=blue!20] (10\cw, 6\lh) rectangle (10.5\cw, 7\lh);
  \draw (2.5\cw, 1.7\lh)++(20:7\lh) -- ++(20:8\lh) -- ++(160:1\lh)
-- ++(20:3.25\lh) -- ++(160:1.0\lh) -- ++(20:1.8\lh) -- ++(160:1.8\lh) --
++(20:3\lh) -- ++(160:2\lh) -- ++(20:3\lh) -- ++(160:2\lh) -- ++(20:2\lh) --
++(160:1\lh) -- ++(20:1.2\lh);
  \draw[->, >=latex] (14\cw, 17\lh) -- ++(160:1.5\lh) -- ++(20:1.5\lh) -- ++(160:1\lh) -- ++(20:1.0\lh);
\end{scope}

\begin{scope}[xshift=2.9\cw,yshift=9.1\lh]
  \draw[draw=green, fill=green!75!yellow, opacity=0.5] (2.5\cw, 1.5\lh)++(20:7\lh) --
++(0.2\cw,1\lh)
parabola ++(2.25\cw, 3.7\lh)
parabola ++(-0.5\cw, 2.5\lh)
parabola ++(2.7\cw, 4.5\lh) -- ++(0,-3\lh)
parabola[bend at end] ++(-1.4\cw, -2.7\lh)
parabola[bend at end] ++(0.5\cw, -2.7\lh)
parabola[bend at end] ++(-3.5\cw, -3.2\lh);
  \draw[fill=blue!20] (9\cw, 6.75\lh) rectangle (9.5\cw, 7.75\lh);
  \draw[->, >=latex] (2.5\cw, 1.75\lh)++(20:7\lh) -- ++(20:5\lh) -- ++(160:2\lh)
-- ++(20:3.25\lh) -- ++(160:2.0\lh) -- ++(20:1.8\lh) -- ++(160:2.2\lh) --
++(20:3\lh) -- ++(160:2\lh) -- ++(20:3\lh) -- ++(160:1\lh) -- ++(20:1.5\lh);
\end{scope}

\begin{scope}[xshift=5\cw,yshift=20.1\lh]
  \draw[draw=green, fill=green!70!blue, opacity=0.5] (2.5\cw, 1.5\lh)++(20:7\lh) --
++(0.2\cw,1\lh) parabola ++(2.1\cw, 1.0\lh) -- ++(0.2\cw, 0) -- ++(0,-1\lh)
parabola[bend at end] ++(-2.1\cw, -1\lh);
  \draw[fill=blue!20] (6.75\cw, 4.95\lh) rectangle (7.25\cw, 3.95\lh);
  \draw[->, >=latex] (2.5\cw, 1.75\lh)++(20:7\lh) -- ++(20:4.5\lh);
\end{scope}
  \fill (4\cw, 10\lh) -- node[sloped,above] {\parbox{2cm}{\centering\tiny New head created}} ++(0.5\cw, 0);
  \draw[red, ->, >=latex] (4\cw, 13.2\lh) -- (9.5\cw, 13.2\lh);

  \fill (4\cw, 17\lh) -- node[sloped,above] {\parbox{2cm}{\centering\tiny Backtrack on demand}} ++(0.5\cw, 0);
  \draw[red, ->, >=latex] (4\cw, 16.5\lh) -- (11.25\cw, 16.5\lh);

\end{tikzpicture}}
\caption{Computation threads}
\end{figure}

By diagonalization, it is easy to see that there are functions on the real numbers, computable by ITCAs in $\omega$ steps but that are not computable by BSS machines.

\section*{Concluding remarks}

We finish this paper by pointing in a direction that we believe to be promising.

Koepke \cite{Koe05} has defined a transfinite time computation model based on Turing machines that has transfinite space. It can thus compute on arbitrary ordinals and sets of ordinals.
Koepke and Siders \cite{KS05} have also extended the infinite time register machines model to machines with registers containing arbitrary ordinals. These models make it possible to compute the bounded truth predicate of the constructible universe, $\{ (\alpha, \varphi, \vec{x}) : \alpha \in \text{Ord}, \varphi \text{ an $\in$-formula}, \vec{x}\in L_\alpha, L_\alpha \models \varphi(\vec{x}) \}$, and allows the following characterization of computable sets of ordinals: a set of ordinals is ordinal computable from of a finite set of ordinal parameters if and only if it an element of Gödel's constructible universe $L$. 
Ordinal computability is moreover interesting to be able to reprove some facts about $L$.

We propose the following definition for ordinal computations over cellular automata.

\begin{defi}
\label{def:oac}
An {\em ordinal cellular automaton} $A$ is defined by $\Sigma$, the finite set
of states of $A$, linearly ordered by $\prec$ and with a least element
$\mathbf{0}$;
and $\delta : \Sigma^3 \to \Sigma$, the local rule of $A$,
satisfying $\delta(\mathbf{0},\mathbf{0},\mathbf{0})=\mathbf{0}$, so that
$\mathbf{0}$ is a {\em quiescent} state.

A {\em configuration} of length $\xi$ is an element of $\Sigma^{\xi}$. The
local rule $\delta$ induces on configurations of length $\xi$ a global rule
$\Delta: \Sigma^{\xi} \to \Sigma^{\xi}$ such that

$$\left\{\begin{array}{ll}
\Delta(C)_0 = \delta (\liminf^{\prec}_{\gamma<\xi} C_{\gamma},C_0,C_1), \\
\Delta(C)_{\beta+1} = \delta (C_{\beta},C_{\beta+1},C_{\beta+2}), \\
\Delta(C)_{\lambda} = \delta (\liminf^{\prec}_{\gamma<\lambda} C_{\gamma},
C_{\lambda}, C_{\lambda+1}) & \text{if $\lambda$ limit and $\lambda > 0$.}
\end{array}\right.
$$

Starting from a configuration $C\in\Sigma^{\xi}$, the {\em evolution} of
length $\theta$ of $A$ is given by $(\Delta^\alpha(C))_{\alpha \leqslant
\theta}$:
$$
\begin{array}{lcl}
\Delta^{\beta+1}(C) & = &\Delta (\Delta^{\beta} (C)) \\
\Delta^{\lambda}(C)_{\iota} & =  &\liminf^{\prec}_{\gamma<\lambda} \Delta^{\gamma}(C)_{\iota} \text{ for all $\iota < \xi$ and $\lambda$ limit}
\end{array}
$$
\end{defi}

We think that it would be interesting to try to carry the results of the other ordinal machines models over to this ordinal cellular automata model. In this model, there is the limitation due to the fixed length of configurations, which is imposed by the cylindrical nature of our model. There are certainly other ways to allow information to flow from the right to the left of the configurations (and not only left to right).

%

\end{document}